\newcommand{\mbf}[1]{\ensuremath{\mathbf{#1}}}
\newcommand{\mbb}[1]{\ensuremath{\mathbb{#1}}}
\newtheorem{theorem}{Theorem}
\newtheorem{definition}{Definition}
\begin{document}

\twocolumn[ 

\title{Odd-symmetry phase gratings produce optical nulls uniquely insensitive to wavelength and depth}


\author{Patrick R. Gill}

\address{Rambus Labs, 1050 Enterprise Way, Suite 700, Sunnyvale, California, 94089, USA}

\begin{abstract}I present the analysis of a new class of diffractive optical element, the odd-symmetry phase grating, which creates wavelength- and depth-robust features in its near-field diffraction pattern.   \end{abstract}

\ocis{050.1960, 050.1970.}

 ] 

\section{Introduction}
Near-field diffractive optical imagers, such as arrays of angle-sensitive pixels (ASPs) that exploit the Talbot effect using integrated CMOS amplitude diffraction gratings \cite{wang2009light,gill2011microscale}, show great promise in enabling the construction of unprecedentedly-small optical sensors.   However, technical and fundamental obstacles limit the high-resolution performance of standard ASP arrays. \cite{gill2012scaling,wang2012light}  Three fundamental obstacles to ASPs' application in high-resolution optical sensing under standard illumination are their sensitivity to manufacturing errors, their transfer function phase-reversals caused by changes in wavelength, and their decreasing area efficiency at higher resolutions.  

Here, I present an analysis of the near-field diffraction caused by odd-symmetry gratings,  a class of diffractive optical element fundamental to the operation of a new class of ultra-miniature imager. \cite{GillStork:13}  Odd-symmetry gratings, unlike ASPs, exhibit wavelength- and depth-robust, compact, angle-dependent null planes under lines of odd symmetry, defined as follows.

\begin{definition}
A null under a line of odd symmetry is ``robust'' if, for normally-incident light of any wavelength $\lambda$, the light intensity on planes beneath the odd-symmetry line is 0.
\end{definition}

Nulls produced by ASPs are not robust due to their acute sensitivity to wavelength and manufacturing depth.  

Like the null at the center of an optical vortex \cite{kivshar2001optical,ostrovsky2013generation}, robust nulls created by odd-symmetry phase gratings are sensitive neither to the depth below the phase element nor to the wavelength of light.  Unlike optical vortices which produce line-shaped nulls, odd-symmetry phase gratings produce plane-shaped nulls.

\section{Geometry and  scalar diffraction}

\begin{figure}[htb]
 \begin{center}
  \includegraphics[width=0.99\columnwidth]{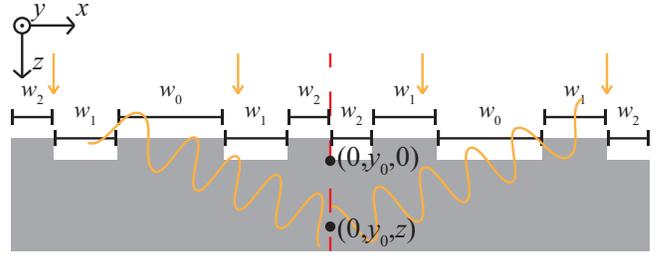}
  \end{center}
\caption{An odd-symmetry, binary phase grating shown in cross-section.  A phase grating at the intersection of two optical media introduces a phase delay of a half wavelength between light passing through a thick versus a thin portion of the grating.  One robust null is shown as a dashed vertical line; if the grating were to repeat then both the left and right borders would also exhibit robust nulls.  The optical media have refractive indices and dispersions such that the phase delay is roughly constant across $\lambda$s of interest.  The grating in this example has only two depths (making it easy to manufacture), whereas in general the phase delay at any point need not belong to a discrete set.}
\label{fig:Geo}
\end{figure}

The electric field amplitude at a specific point $(0,y_0,z>0)$ below the phase grating (see Fig. \ref{fig:Geo}) can be found by integrating the contributions to that point caused by light passing through all locations of the phase grating above.  I will use $y_0$ to denote the $y$ coordinate of the point where the electric field amplitude is observed, reserving $y$ for  positions at the phase grating in the same coordinate system.  
Let the time-varying optical electric field induced by light of polarization $\psi$ and wavelength (in the medium) $\lambda$ at a point $(0,y_0,z)$ be $\Re \left(E(\lambda,y_0,z,\psi) e^{\frac{-2\pi i c t}{n \lambda}}  \right)$, where $c$ is the speed of light and $n$ is the refraction index of the medium.  The light intensity at $(0,y_0,z)$ is proportional to $\left|E(\lambda,y_0,z,\psi)\right|^2 $.  The expression for the complex electric field amplitude $E(\lambda,y_0,z,\psi)$ for normally-incident light is given by scalar diffraction:
\begin{equation}
E_{\lambda,y_0,z,\psi}\!=\!\iint\limits_{P}C(x,y-y_0,z)e^{2\pi i \frac{r}{\lambda}}   e^{i \phi(x,y,\psi,\lambda)} dy \,dx. \label{fullEM}
\end{equation}
$P$ is the plane of the grating, $r = \sqrt{x^2 + (y-y_0)^2 + z^2}$, and $C(x,y-y_0,z)$ represents the magnitude of the Green's function governing the coupling between the surface of the grating at $(x,y,0)$ and the point $(0,y_0,z)$.  $C$ is strictly positive on $r < \infty$ and $C(x,y-y_0,z) =  C(-x,y-y_0,z)$.  The real part of $\phi(x,y,\psi,\lambda)$ is the phase delay (in radians) introduced by the grating at position $(x,y,0)$ for light of polarization $\psi$ and wavelength $\lambda$, while the imaginary part of $\phi$ represents attenuation, which we assume to be finite or 0.  Exploiting the reflection symmetry of most terms in Eq. \ref{fullEM} about $x=0$, it is possible to write an expression for $E$ integrating over the half-plane $H$ where $x>0$: 
\begin{equation}
E_{\lambda,y_0,z,\psi}\!=\!\iint\limits_{H}C(x,y-y_0,z)e^{2\pi i \frac{r}{\lambda}}   \underbrace{\left(e^{i \phi(x)} + e^{i \phi(-x)}\right)}_{\text{Grating effects}} dy \, dx \label{halfEM} 
\end{equation}
where $\phi$'s dependence on $(y,\psi,\lambda)$ has been omitted for compactness and the terms showing the effects of the gratings have been explicitly marked.  Let us define odd-symmetry phase gratings as follows.
\begin{definition}
A phase grating has odd symmetry along the $y$-axis if Eq. \ref{eq:oddSym} holds:
\begin{equation}
\phi(x,y,\lambda,\psi) \! = \! \phi(-x,y,\lambda,\psi) + \pi + 2\pi m, \, \, m\; \in \mbb{Z} \label{eq:oddSym}
\end{equation}
where $\mbb{Z}$ denotes the set of integers.  \end{definition} Phase profiles similar to these have been explored in Dammann gratings \cite{morrison1992symmetries}, however to my knowledge this is their first use in near-field diffraction elements.  Figure \ref{fig:Geo} illustrates a phase grating with only two heights ({\em i.e.} a binary grating), while gratings can take multiple, or even piecewise continuous, heights and still conform to Eq. \ref{eq:oddSym}.  

There are three free parameters in the binary grating of Fig. \ref{fig:Geo}: the lengths $w_0$, $w_1$ and $w_2$.  In general, one can construct a binary odd-symmetry grating with any number of such lengths, so long as thick and thin grating segments alternate as shown in Fig. \ref{fig:Geo}.  Repeating binary odd-symmetry gratings (such as the design shown in Fig. \ref{fig:Geo}) with $n+1$ free parameters create repeated odd-symmetry planes spaced apart by a distance $w_0 + \sum\limits_{j=1}\limits^{j=n}2w_j$.

Ignoring dispersion, phase delays will be proportional to the reciprocal of the wavelength.  However, by pairing a high-dispersion, low-$n$ optical plastic above a low-dispersion, high-$n$ optical glass, the phase delay introduced by the grating can be made to be approximately wavelength-independent, as seen in Fig. \ref{fig:OKn}.

\begin{figure}[htb]
 \begin{center}
  \includegraphics[width=0.99\columnwidth]{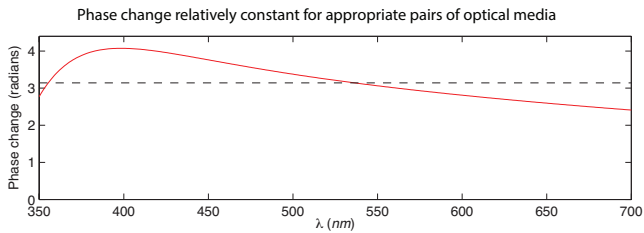}
  \end{center}
\caption{Phase delay (solid red) induced by a 0.9-micron-tall phase grating made from a high-dispersion, low-$n$ optical plastic above a low-dispersion, high-$n$ optical glass.  Phase delays are roughly equal to $\pi$ (dashed line) for all visible light.}
\label{fig:OKn}
\end{figure}

\section{Proof that odd symmetry is sufficient to produce robust nulls}
\label{sec:suf}
At all points on the half-plane $(0,y_0,z),\, z>0$, Eq. \ref{eq:oddSym} guarantees that a for every contribution to $E$ from the left ($x<0$), there is a contribution from the right ($x>0$) of equal magnitude but opposite phase.  As each pair of contributions cancel, they sum to 0.  To aid our proofs of the equivalence of odd symmetry and robust nulls, I will introduce two functions $q$ and $p$ that loosely describe the in- and out-of-phase contributions from pairs of points reflected about $x=0$.  
\begin{theorem}
If a phase grating satisfies Eq. \ref{eq:oddSym}, it has a robust null at $x = 0$.
\end{theorem}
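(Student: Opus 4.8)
The plan is to show that odd symmetry forces the combined grating factor in Eq.~\ref{halfEM}---the term marked ``Grating effects''---to vanish \emph{pointwise} on the integration domain $H$, so that the entire integrand is identically zero before any integration against $C$ or $e^{2\pi i r/\lambda}$ is performed. This pointwise vanishing is the whole content of robustness: because the cancellation lives in the integrand rather than arising only after integrating over a particular geometry, it cannot depend on $\lambda$, on $z$, or on the detailed form of the Green's-function magnitude.

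First I would set up the two functions $q$ and $p$ anticipated in the text. Fixing $(y,\psi,\lambda)$ and writing $\phi(x)$ and $\phi(-x)$ for the phase delays at a point and its mirror image, I would define the half-sum $p = \tfrac12\bigl(\phi(x)+\phi(-x)\bigr)$ and the half-difference $q = \tfrac12\bigl(\phi(x)-\phi(-x)\bigr)$, so that
\[
e^{i\phi(x)} + e^{i\phi(-x)} = 2\,e^{ip}\cos q .
\]
Here $p$ captures the common (in-phase) part of the paired contributions and $q$ the out-of-phase part. This factorization is the only algebraic manipulation required, and it is routine.

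Second, I would invoke Eq.~\ref{eq:oddSym}. Odd symmetry states $\phi(x) = \phi(-x) + \pi + 2\pi m$ for some $m\in\mbb{Z}$, that is, $q = \tfrac{\pi}{2} + \pi m$, so $\cos q = 0$ at every point of $H$ and the grating factor $2\,e^{ip}\cos q$ vanishes identically. I would remark that the condition is imposed on the full complex $\phi$, so the imaginary (attenuation) parts of $\phi(x)$ and $\phi(-x)$ coincide and the integer ambiguity is harmless: since $e^{i(\pi + 2\pi m)} = -1$ for every $m$, we get $e^{i\phi(x)} = -e^{i\phi(-x)}$ \emph{exactly}, even in the presence of finite attenuation.

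Finally, because the integrand of Eq.~\ref{halfEM} is the product of the finite factor $C\,e^{2\pi i r/\lambda}$ with this vanishing grating factor, the integrand is zero everywhere on $H$; hence $E_{\lambda,y_0,z,\psi} = 0$ for every $\lambda$, every $z>0$, and every polarization $\psi$, so the intensity $|E_{\lambda,y_0,z,\psi}|^2$ is $0$ on the entire half-plane below the line $x=0$, which is precisely the defining property of a robust null. I do not expect a genuine analytic obstacle; the one point demanding care is a modeling one, namely confirming that the reduction to the half-plane in Eq.~\ref{halfEM} is legitimate---that $C$ and $r$ are truly invariant under $x\mapsto -x$ (guaranteed by $C(x,\cdot)=C(-x,\cdot)$ and the form of $r=\sqrt{x^2+(y-y_0)^2+z^2}$), so that the \emph{only} asymmetry between a point and its reflection resides in the grating phase. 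Once that is secured, robustness follows immediately from the pointwise cancellation.
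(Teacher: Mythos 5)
Your proof is correct and follows essentially the same route as the paper's: you factor the paired contribution $e^{i\phi(x)}+e^{i\phi(-x)}$ via the half-sum and half-difference of $\phi(x)$ and $\phi(-x)$ and show the grating factor vanishes pointwise, differing only in normalization (your unshifted $p,q$ give $2e^{ip}\cos q$ with $q=\tfrac{\pi}{2}+\pi m$, whereas the paper's $\pi$-shifted definitions give $2ie^{ip}\sin q$ with $q=m\pi$ --- trivially the same identity). Your explicit remarks that the odd-symmetry condition on the full complex $\phi$ makes the attenuation parts cancel exactly, and that the half-plane reduction rests on the $x\mapsto -x$ symmetry of $C$ and $r$, are sound and if anything slightly more careful than the paper's exposition.
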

\begin{proof}
To see why odd symmetry produces robust nulls, introduce the following functions:
\begin{equation}
\label{defp} p(x) \! \equiv \! \frac{\phi(x) + \phi(-x) + \pi}{2}
\end{equation}
\begin{equation}
\label{defq} q(x) \! \equiv \! \frac{\phi(x) - \phi(-x) - \pi}{2},
\end{equation}
where  dependence on $(y,\psi,\lambda)$ has been omitted for succinctness.  The following identities obtain:
\begin{equation}
\label{sumpq} \phi(x) \! = \! p(x)+q(x)
\end{equation}
\begin{equation}
\label{pneg} p(-x) \! = \! p(x)
\end{equation}
\begin{equation}
\label{qneg} q(-x) \! = \! -q(x) - \pi 
\end{equation}
\begin{equation}
\label{allTogether} \left(e^{i \phi(x)} + e^{i \phi(-x)}\right) \! = \! 2ie^{i p(x)}\sin(q(x)).
\end{equation}
If and only if Eq. \ref{eq:oddSym} is satisfied, $q(x)$ evaluates to $m\pi,\, \, m\; \in \mbb{Z}$ and $\sin(q(x)) = 0$, so by Eq. \ref{allTogether}, $\left(e^{i \phi(x)} + e^{i \phi(-x)}\right)=0$.  Substituting into Eq. \ref{halfEM}, we see that if Eq. \ref{eq:oddSym} is satisfied, $E_{\lambda,y_0,z,\psi} = 0$ regardless of $z$, $y_0$, $\lambda$ or $\psi$, and the null is robust. \qedhere
\end{proof}
Therefore, adherence to Eq. \ref{eq:oddSym} is sufficient to create robust nulls in the near-field diffraction patterns produced by phase gratings.  Diffraction-based optical elements made with odd-symmetry gratings therefore do not exhibit phase-reversing near-field nulls with changing wavelength or depth.  

If the phase grating has multiple nulls, the spacing between any adjacent pair being at least several microns, then it is straightforward to construct multiple photodiodes per period of the overlying phase grating.  I discuss elsewhere \cite{GillStork:13} how this arrangement can lead to an angle-sensitive photosensor with much better information density than ASP-based optical sensors.

\section{Proof that odd symmetry is necessary to produce robust nulls}
In Sect. \ref{sec:suf}, we saw that odd symmetry is sufficient to induce robust nulls.  Here, I prove that it is also necessary, a useful result in that it can dramatically limit the design  space needed to be considered for designing wavelength- and depth-robust phase gratings.  
\begin{theorem}
If a phase grating has a robust null at $x = 0$ extending for an open set of $y_0$ and its attenuation is less than infinite everywhere, its phase retardation and attenuation $\phi$ satisfies Eq. \ref{eq:oddSym} on every open set of grating locations.
\end{theorem}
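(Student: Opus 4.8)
The plan is to show that the linear map sending the ``grating term'' $g(x,y)\equiv e^{i\phi(x,y)}+e^{i\phi(-x,y)}$ to the observed field via Eq. \ref{halfEM} is injective once all the continuous parameters are retained, so that a robust null ($E_{\lambda,y_0,z,\psi}\equiv 0$) can only arise from $g\equiv 0$. The payoff is then immediate: by Eq. \ref{allTogether}, $g=2i\,e^{ip(x)}\sin(q(x))$, and because the attenuation is assumed finite the factor $e^{ip(x)}$ has nonzero (finite) modulus for every $x$; hence $g(x,y)=0$ forces $\sin(q(x))=0$, i.e. $q(x)=m\pi$ with $m\in\mbb{Z}$, which by the definitions \ref{defp}--\ref{defq} of $p$ and $q$ is exactly Eq. \ref{eq:oddSym}. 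The finite-attenuation hypothesis enters here and only here: it is what guarantees $e^{ip}\neq 0$, so that the null cannot be manufactured by absorption instead of by symmetry. It therefore suffices to prove $g\equiv 0$ on every open set of grating locations, and since odd symmetry must hold separately for each polarization, I fix $\psi$ throughout.

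First I would exploit the freedom in $\lambda$. Writing $k=2\pi/\lambda$, the factor $e^{2\pi i r/\lambda}=e^{ikr}$ is a Fourier kernel in the path length $r$. For fixed $(z,y_0)$ I slice the integration in Eq. \ref{halfEM} by the level sets of $r=\sqrt{x^2+(y-y_0)^2+z^2}$: for each radius $r>z$ the sphere of radius $r$ about $(0,y_0,z)$ meets the half-plane $H$ along a curve parametrized by $x\in(0,\sqrt{r^2-z^2}\,]$ with $y=y_0\pm\sqrt{r^2-x^2-z^2}$, and the co-area change of variable $dy=(r/\sqrt{r^2-x^2-z^2})\,dr$ collapses Eq. \ref{halfEM} to $E_{\lambda,y_0,z,\psi}=\int_{z}^{\infty}M(r;z,y_0)\,e^{ikr}\,dr$, where $M(r;z,y_0)$ is the (complex) grating mass $C\cdot g$ accumulated on that spherical shell. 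Robustness makes this vanish for every $k>0$; since $M(\,\cdot\,;z,y_0)$ is supported on $r\ge z$ and integrable, its Fourier transform is the boundary value of a function holomorphic in the upper half $k$-plane, and such a function cannot vanish on the entire half-line $k>0$ unless $M(r;z,y_0)=0$ for all $r$, all $z$, and all $y_0$ in the open set.

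The second, and genuinely harder, step is to undo this spherical-shell averaging. Vanishing of $M(r;z,y_0)$ for all admissible $(r,z,y_0)$ is an injectivity statement for a spherical-means / Abel transform: the change of variables left behind a weakly singular weight $1/\sqrt{r^2-x^2-z^2}$, so at fixed $(z,y_0)$ the relation $M\equiv 0$ is an Abel integral equation in the variable $r^2-z^2$ whose inversion recovers the even-in-$u$ combination of $C(x,\pm u,z)\,g(x,y_0\pm u)$ on each in-plane semicircle $x^2+u^2=r^2-z^2$. I then use the two remaining continuous parameters --- sliding the sphere center in $z$ and in $y_0$ --- to separate the two branches $y_0\pm u$ and to strip off the known, strictly positive coupling $C$; because the system is overdetermined (three continuous parameters $r,z,y_0$ constraining a density on the two-dimensional grating plane), this pins down $C(x,y-y_0,z)\,g(x,y)$ pointwise, and dividing by $C>0$ gives $g(x,y)=0$ on every open set of grating locations. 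Feeding $g\equiv 0$ back through the first paragraph yields Eq. \ref{eq:oddSym}.

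I expect the Abel/spherical inversion of the third paragraph to be the crux. Fourier uniqueness, the positivity of $C$, and the algebra linking $g=0$ to Eq. \ref{eq:oddSym} are all routine; the real content is that the values of $E$ taken over all wavelengths, all depths $z$, and an open set of transverse positions $y_0$ are rich enough to reconstruct the grating term uniquely, and it is precisely there that each hypothesis is consumed.
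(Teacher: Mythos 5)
Your first two steps are sound, and they take a genuinely different (and in places more careful) route than the paper: your reduction to showing $g \equiv 0$ via $g = 2i e^{ip}\sin(q)$ with finite attenuation is the same algebra as the paper's factorization $F = \sin(q)$, $G = 2iC e^{ip}$, but your shell-slicing in $r$ followed by holomorphy of $\hat{M}(k)$ in the upper half $k$-plane and boundary uniqueness is a legitimate, arguably more rigorous substitute for the paper's heuristic Parseval argument in the vector variables $\mbf{\nu}$ and $\mbf{r'}$. The conclusion $M(r;z,y_0) = 0$ for all $r$, all $z$, and all $y_0$ in the open set is correctly reached (modulo integrability of $Cg$, which the paper also assumes implicitly).

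The gap is in your third step, and it is genuine. At fixed $(z,y_0)$, set $\rho = \sqrt{r^2 - z^2}$ and parametrize the shell by angle: one finds $M(r;z,y_0)/r = \int_0^{\pi/2} h(\rho\cos\theta, \rho\sin\theta)\, d\theta$ with $h(x,u) = C(x,u,z)\,g(x,y_0+u) + C(x,-u,z)\,g(x,y_0-u)$. Vanishing for every $\rho$ therefore gives only the \emph{angular average} of $h$ on each quarter-circle --- one number per radius --- and there is no Abel equation to invert: Abel inversion would apply only if $h$ depended on a single variable, so your claim that $M \equiv 0$ at fixed $(z,y_0)$ ``recovers the even-in-$u$ combination on each semicircle'' fails. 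Worse, the ``overdetermined parameter count'' heuristic is exactly the wrong tool here: the circular-mean transform with centers on the \emph{entire} line $x=0$ already supplies two data parameters for a two-variable density, yet it is famously non-injective, its kernel being precisely the functions odd in $x$. What actually rescues the theorem is that $g(x,y) = e^{i\phi(x,y)} + e^{i\phi(-x,y)}$ is even in $x$ by construction, as is $C$, so $Cg$ is orthogonal to that kernel --- a fact your third step never invokes. Nor can ``sliding in $z$'' separate the branches $y_0 \pm u$: for physical Green's functions $C$ depends on $(x, y-y_0, z)$ only through $r$ and $z$, so it is constant on each shell and the $z$-family of equations is redundant with the $r$-family. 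Finally, even after exploiting evenness, the centers range only over an open interval of $y_0$ rather than the whole line, and passing from interval to line requires a unique-continuation or support theorem for circular means (e.g. of Agranovsky--Quinto type, or via the Euler--Poisson--Darboux equation the means satisfy); this is the same delicate point the paper disposes of with its brisk exclusion of ``rotational symmetry at contiguous points,'' but your sketch currently has nothing standing in its place.
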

The constraint that the grating's attenuation is never infinite is a consequence of the fact that the phase of a contribution of $0$ magnitude is not relevant to the observed electric field magnitude.  A straightforward extension of the proof below covers the limit where certain regions of the grating (also symmetric about $x=0$) completely block light.  The ``open set'' restriction is also useful to exclude sets of measure 0 and points infinitely far from $(0,y_0,z)$.
\begin{proof}
I will construct a proof by contradiction.  Assume the grating has a robust null, yet violates Eq. \ref{eq:oddSym} on some open set of grating locations.  I will factor the terms in the integrand of Eq. \ref{halfEM} into one portion $F$ that is 0 only if Eq. \ref{eq:oddSym} is satisfied, and a second portion $G$ which is always nonzero in magnitude.  I will then use Parseval's theorem to show that assuming a null is robust actually implies $F$ must be 0 except on sets of measure 0, violating the assumption that Eq. \ref{eq:oddSym} is not satisfied on some open set.  
Define $F$ and $G$ as follows: $F \equiv \sin(q(x,y,\lambda,\psi))$ and $G \equiv 2iC(x,y-y_0,z)e^{ip(x,y,\lambda,\psi)}$.  The Green's function $C(x,y-y_0,z)$ has a finite magnitude except at $\infty$, and by assumption that the grating's attenuation is less than infinite, $ \Im \left(p(x,y,\lambda,\psi)\right) < \infty$, making the magnitude of  $e^{ip(x,y,\lambda,\psi)}$ also finite.  Therefore, $|G|$ is nonzero except at $\infty$.  By Eqs. \ref{halfEM} and \ref{allTogether},
\begin{equation}
E_{\lambda,y_0,z,\psi}\!=\!\iint\limits_{H}F(x,y) e^{2\pi i  \frac{r}{\lambda}} G(x,y) \, dy \, dx \label{FG1} 
\end{equation}
Introduce the change of variables $\mbf{\nu}  \equiv \hat{r} \frac{c}{\lambda}$ where $\hat{r}$ is the unit vector in the direction of $(x,y)$ and $|\mbf{\nu}|$ is the frequency of the light of wavelength $\lambda$.  Also introduce a change of variables $\mbf{r'} \equiv \hat{r} \left( \frac{\sqrt{x^2 + (y-y_0)^2 + z^2}}{c} - \frac{z}{c}\right)$, making new functions $F'$ and $G'$, which are functions of $\mbf{r'}$ and $\mbf{\nu}$.  Note that \mbf{r'} and \mbf{\nu} are vector quantities, shown in boldface.  Let $F'(\mbf{r'})$ take the exact value of $F(x,y)$ at the $\mbf{r'}$ corresponding to $(x,y)$, but scale $G'$ such that $G'(\mbf{r'}) d\mbf{r'} = 2\pi\, G(x,y)\, dx\, dy$.  Other than at $\mbf{r'} = \mbf{0}$ (a set of measure 0), $F'$ and $G'$ are finite precisely where $F$ and $G$ are, open sets on the $x$-$y$ plane are mapped to open sets on $\mbf{r'}$, and $|G'|$ is finite except at $\mbf{\infty}$.  Equation \ref{FG1} becomes:
\begin{equation}
E_{\nu,y_0,z,\psi}\!=\!\frac{1}{2\pi}\iint\limits_{H}F'(\mbf{r'}) e^{2\pi i \mbf{\nu} \cdot \mbf{r'}} G'(\mbf{r'}) d\mbf{r'} \label{FG2}. 
\end{equation}
Extending $F'$ and $G'$ out of the half-plane $H$ as follows:
 \begin{eqnarray}
 F''(\mbf{r'}) & \equiv &
 \left\{ 
 	\begin{array}{ll}
		F'(\mbf{r'}) & \mbox{if } x > 0  \\
		0 & \mbox{if } x \le 0
	\end{array}	
\right. \nonumber \\
 G''(\mbf{r'}) & \equiv &
 \left\{ 
 	\begin{array}{ll}
		G'(\mbf{r'}) & \mbox{if } x > 0  \\
		1 & \mbox{if } x \le 0
	\end{array}	
\right. \nonumber
 \end{eqnarray}
permits Eq. \ref{FG2} to be extended over the full plane $P$, leading to a Fourier transform in $\nu$:
\begin{equation}
E_{\nu,y_0,z,\psi}\!=\!\frac{1}{2\pi}\iint\limits_{P}F''(\mbf{r'}) e^{2\pi i \mbf{\nu} \cdot \mbf{r'}} G''(\mbf{r'}) d\mbf{r'} \label{FG3}. 
\end{equation}
Note that $|G''|$ is finite except at $\infty$ and \mbf{0}. By Parseval's theorem, Eq. \ref{FG3} implies that
\begin{equation}
\iint\limits_{P}\left|E_{\nu,y_0,z,\psi}\right|^2 d\nu\!=\!\iint\limits_{P}\left| F''(\mbf{r'}) G''(\mbf{r'})\right|^2 d\mbf{r'} \label{FG4}. 
\end{equation}
The integral of $E_{\nu,y_0,z,\psi}$ on the circle $|\nu| = \frac{c}{\lambda}$ gives the amplitude of the electric field due to light at wavelength $\lambda$ observed at $(0,y_0,z)$, which is $0$ by assumption.  
Since $G''$ is finite on all open sets of $P$ except at the origin, either $F''$ must be 0 on all open sets (and the grating has odd symmetry) or $F''$ is nonzero and $E_{\nu,y_0,z,\psi}$ is nonzero but its integral over all circles $|\nu| = \frac{c}{\lambda}$ is zero (implying rotational symmetry in $\phi$ about $(x,y_0)$).  This second possibility cannot be true by assumption that the null persists over an open set of $y_0$, and contiguous points cannot all have nontrivial rotational symmetry.  Therefore $F''$ is 0 on all open sets and the grating must satisfy Eq. \ref{eq:oddSym} on all open sets, violating the assumptions and achieving proof by contradiction. \qedhere
\end{proof}

Therefore, the only way to achieve robust null halfplanes in a near-field diffraction pattern with a phase grating is if that phase grating has one or more odd-symmetry lines.  
\section{Applications and conclusions}
This paper has shown that odd-symmetry phase gratings produce wavelength- and depth-robust null halfplanes under normally-incident light, and conversely that the only way of manufacturing wavelength-robust null halfplanes is to construct an odd-symmetry grating.  

Odd-symmetry gratings promise many new applications.  They are as small as other diffraction-scale elements such as ASPs, yet they overcome many of the latter's limitations.  Thus, we expect odd-symmetry gratings might enable new classes of optical sensors and light sources, and we are actively investigating optical sensing, illumination and imaging applications using these gratings both alone and in combination with other optical elements.  

\section*{Acknowledgement}
I would like to thank David G. Stork for insightful comments and guidance in the formulation of this paper and its proofs.

\end{document}